\documentclass[submission,copyright,creativecommons]{eptcs}
\usepackage{breakurl}             
\usepackage{amsmath}
\usepackage{amssymb}
\usepackage{amsthm}
\usepackage{stmaryrd}
\usepackage{mathpartir}
\usepackage{preamble}
\usepackage{txfonts}
\usepackage{todonotes}
\usepackage{tikz-cd}
\usepackage{paralist}

\long\def\omitnow#1{\relax}

\newcommand{\dmapsto}{\mapsto\!\!\!\!\!\!\!\to}

\newcommand{\dhreduct}{\hreduct\!\!\!\!\!\!\to}


\newcommand{\red}{\rightarrow}

\newcommand{\reds}{\rightarrow\!\!\!\!\!\!\!\to}


\title{First Class Call Stacks: Exploring Head Reduction}
\author{Philip Johnson-Freyd, Paul Downen and Zena M. Ariola
\institute{University of Oregon\\
Oregon, USA}
\email{\{philipjf,pdownen,ariola\}@cs.uoregon.edu}
}

\begin{document}
\maketitle

\begin{abstract}
Weak-head normalization is inconsistent with functional extensionality
in the call-by-name $\lambda$-calculus.  We explore this problem from a new
angle via the conflict between extensionality and effects. Leveraging 
ideas from work on the $\lambda$-calculus with control, we derive
and justify alternative operational semantics and a sequence
of abstract machines for performing head reduction.
Head reduction avoids the problems with weak-head reduction and extensionality,
while our operational semantics and associated abstract machines show us how
to retain weak-head reduction's ease of implementation.
\end{abstract}

\section{Introduction}
Programming language designers are faced with multiple, sometimes
contradictory, goals.  On the one hand, it is important that users be
able to reason about their programs.  On the other hand, we want our
languages to support simple and efficient implementations.  For the
first goal, \emph{extensionality} is a particularly desirable property.  We
should be able to use a program without knowing how it was written,
only how it behaves.  In the $\lambda$-calculus, extensional reasoning
is partially captured by the $\eta$ law, a strong equational property
about functions which says that functional delegation is unobservable:
$\lambda{x}.f~x=f$.  It is essential to many proofs about functional
programs: for example, the well-known ``state monad'' only obeys the
monad laws if $\eta$ holds~\cite{marlow:statemonad:02}.  The $\eta$
law is compelling because it gives the ``maximal'' extensionality
possible in the untyped $\lambda$-calculus: if we attempt to equate
any additional terms beyond $\beta$, $\eta$, and $\alpha$, the theory
will collapse~\cite{bohm1968alcune}.
For the second goal, it is important to have normal forms,
specifying the possible results of execution, that can be
efficiently computed. To that end, most implementations 
stop execution when they encounter a lambda-abstraction. This is called
\emph{weak-head normalization} and has the advantage that evaluation
never encounters a free variable so long as it starts with a closed term.
Only needing to deal with closed programs is a great boon for
implementers.  Beyond the added simplicity that comes from knowing
we won't run into a variable, in a $\beta$ reduction
$(\lambda x.v)~v' \reduct v[v'/x]$ the fact that $v'$ is closed
means that the substitution operation $[v'/x]$ need not
rename variables in $v$. More generally, weak-head normalization
on closed programs avoids the \emph{variable capture problem},
which is quite convenient for implementations.
In addition, programmers wanting to work with infinite data structures,
which can improve modularity by separating unbounded producers from
consumers \cite{Hughes:1989:WFP:63410.63411}, might be more inclined
to use a non-strict programming language like Haskell rather than ML.

%

For these reasons, \begin{inparaenum}[(1)]
\item call-by-name (or call-by-need) evaluation, \item weak-head normal forms and \item functional
extensionality \end{inparaenum} are all desirable properties to have.
However, the combination of all three is inconsistent, representing a trilemma,
so we can pick at most two.
Switching to call-by-value evaluation respects
extensionality while computing to weak-head normal forms.
But, sticking with call-by-name forces us to abandon one of
the other two. 
There is a fundamental tension between
evaluation to weak-head normal form, which always finishes when it reaches a lambda, and the 
$\eta$ axiom, which tells us that a lambda might not be done yet.
For example, the $\eta$ law says that $\lambda x.\Omega x$ is the same
as $\Omega$, where $\Omega$ is the non-terminating computation
$(\lambda y.y~y)(\lambda y.y~y)$.  Yet, $\lambda x.\Omega x$ is a weak-head normal form 
that is \emph{done} while $\Omega$ isn't.  Thus, if we want to use
weak-head normal forms as our stopping point, the $\eta$ law becomes suspect.
This puts us in a worrisome situation: $\eta$-equivalent programs might have
different termination behavior.  As such, we cannot use essential
properties, like our earlier example of the monad laws for state, for
reasoning about our programs without the risk of changing a program
that works into one that doesn't.
The root of our problem is that we combined extensionality with
effects, namely non-termination.  This is one example of the recurrent
tension that arises when we add effects to the call-by-name
$\lambda$-calculus.  For example, with printing as our effect, we
would encounter a similar problem when combining $\eta$ with evaluation to weak-head normal forms.
Evaluating the term
``$\texttt{print "hello"}; (\lambda y.y)$" to its weak-head normal form
would print the string
$\texttt{"hello"}$, while evaluating the $\eta$-expanded term
$\lambda x.(\texttt{print "hello"}; (\lambda y.y)) x$ would not.

Further, $\eta$ reduction even breaks confluence when the $\lambda$-calculus
is extended with control effects.
Recent
works~\cite{carraro:stackcalculus:LSFA,lambdamucons} suggest how to solve the
problem by adopting an alternative view of functions.
We bring new insight into this view:
through abstract machines we show how
the calling context, i.e. the elimination form of a lambda abstraction, can be given first-class
status and
lambda abstractions can then be seen as
\emph{pattern-matching} on the calling context.
We then utilize the view that pattern-matching is simply
syntactic sugar for projection operations; this 
suggests how to continue computing under a lambda abstraction.
We present and relate a series of operational semantics and abstract machines
for head reduction motivated by this insight into the nature of lambda abstractions.

After reviewing 
the small-step operational semantics, big-step operational
semantics and Krivine abstract machine for weak-head evaluation (Section \ref{smallbigAM}), we turn our investigation to
head reduction with the goal of providing the three different styles of semantics.
We start our exploration of head reduction with the Krivine abstract machine because it helps
us to think about the evaluation context as a first class object, and extend it with a construct
that \emph{names} these contexts. This brings out the
\emph{negative} nature of functions \cite{DBLP:conf/esop/DownenA14}.  Functions are not constructed but are \emph{de-constructors}; it is the contexts of functions which are 
constructed. To emphasize this view, functions are presented as pattern-matching on the calling
context, which naturally leads to a presentation of functions
that translates pattern-matching into projections;
analogously to the two ways tuples are treated in programming languages.
By utilizing this approach based on projections, we modify our Krivine machine with control
to continue evaluation instead of getting stuck on a top-level lambda (Section \ref{functionsaspatternmatching}).
Having gathered intuition on contexts, we focus on the control-free
version of this machine.  This leads  to our first
abstract machine for head reduction, and we utilize the syntactic correspondence  \cite{Biernacka-Danvy:TOCL07}
to derive an operational semantics for head reduction in the lambda-calculus
(Section \ref{smallbigamhead}). The obtained operational semantics
is, however, more complicated than would be desirable,
and so we define a simpler but equivalent operational semantics.
By once again applying the syntactic correspondence, we derive an abstract machine for
head reduction which is not based on projections and, by way of the functional correspondence \cite{Ager:2003:FCE:888251.888254},
we generate a big-step
semantics which is shown to be equivalent to Sestoft's  big step semantics for head reduction \cite{Sestoft:2002:DLC:860256.860276}.
Finally, we
conclude with a more efficient implementation of the projection based machine that
coalesces multiple projections into one (Section \ref{coalesced}).

\section{Weak-head Evaluation: Small and Big-step Operational Semantics and an Abstract Machine}
\label{smallbigAM}

\begin{figure}[t]
$$\begin{array}{c}
 E \in \textrm{EvaluationContexts} ::= \Box \mid E~ v  \\
 \\
  E[(\lambda x.v)~v'] \mapsto E[v[v'/x]] 
  \end{array}$$
   \caption{Small-step weak-head reduction}
  \label{fig:weakheadsmallstep}
\end{figure}

The semantics of a programming language can come in different flavors.
It can be given by creating a mapping from a syntactic domain of programs into
an abstract domain of mathematical structures (denotational semantics), or it can be given only
in terms of syntactic manipulations of programs (operational semantics). Operational semantics
can be further divided into \emph{small-step} or \emph{big-step}.  A small-step operational semantics
shows step-by-step how a program transitions to the final result.  A
big-step operational semantics, contrarily, only shows  the relation between a program and its final result with no intermediate steps shown, as in
an evaluation function. We first start in Figure \ref{fig:weakheadsmallstep} with a  small-step call-by-name semantics for $\lambda$-calculus, whose
terms are defined as follows:
\[ v \in \textrm{Terms} ::= x \mid v~v' \mid \lambda x.v \enspace . \]
The  \emph{evaluation context}, denoted by $E$, is simply a term with a hole, written
as $\Box$, which specifies where work occurs in a term.  A bare $\Box$
says that evaluation occurs at the top of the program, and if that is not
reducible then $E~v$ says that the search should continue
to the left of an application.
The semantics is then given by a single transition rule which specifies how to handle application.
According to this semantics, terms of the form
$x~ ((\lambda x.x)y)$ or $\lambda x. x~ ((\lambda x.x)y)$ are not reducible.
The final answer obtained is called  \emph{weak-head normal form} (whnf for short); a lambda abstraction is
in whnf and an application of the form $x~v_1\cdots v_n$ is in whnf.
We can give a grammar defining a whnf 
by using the notion of ``neutral'' from Girard for $\lambda$-calculus terms other
than lambda abstractions \cite{girard1989proofs} (see 
Figure \ref{fig:weakHeadNormalForms}). 

\begin{figure}[t]
  \begin{align*}
    \mbox{N} \in \textrm{Neutral} &::= x \mid \mbox{N}~ v\\
    \mbox{WHNF} \in \textrm{WeakHeadNormalForms} &::= \mbox{N} \mid \lambda x.v
  \end{align*}
  \caption{Weak-head normal forms}
  \label{fig:weakHeadNormalForms}
\end{figure}

\begin{figure}[t]
  {
    \[
    \begin{array}{l}
    \begin{array}{ll}
      c \in \textrm{Commands} &::= \cmd{v}{E} \\
         E \in \textrm{CoTerms} &::=  \tpcst \mid v \cdot E  \\
    \end{array} \\ 
   \\
    \begin{array}{ll}
      \cmd{v~v'}{E} &\reduct \cmd{v}{v' \cdot E} \\
      \cmd{\lambda x.v}{v' \cdot E} &\reduct \cmd{v[v'/x]}{E}
    \end{array}\end{array}\]
    \caption{Krivine abstract machine}
    \label{fig:krivinemachine}
  }
\end{figure}

Note that decomposing a program into an evaluation context and a redex 
is a meta-level operation in the small-step operational semantics.  We can instead make this operation an explicit part of the formalization in an \emph{abstract machine}.
In Figure~\ref{fig:krivinemachine} we give the Krivine abstract
machine~\cite{Krivine:2007:CLM:1325146.1325153}, which
can be derived directly from the operational semantics by reifying
the evaluation context  into a data structure called a \emph{co-term} \cite{Biernacka-Danvy:TOCL07}.
The  co-term $\tpcst$ is understood as corresponding to the empty context $\Box$,
while the call-stack $v \cdot E$ can be thought of as the context $E[\Box~v]$.
With this view, the formation of a command $\cmd{v}{E}$ corresponds to plugging $v$ into $E$ to obtain $E[v]$.
The reduction rules  of the Krivine machine are
justified by this correspondence: we have a rule that
recognizes that $E[v~v'] = (E[\Box~v'])[v]$ and
a rule for actually performing $\beta$ reduction inside an evaluation context.
We can further describe how to run a $\lambda$-calculus
term in the Krivine machine by plugging the term into  an empty context
\[ v \creduct \cmd{v}{\tpcst} \]
then after performing as many evaluation steps as possible, we ``readback'' a lambda term using the
 rules
\[   \cmd{v}{v' \cdot E} \hreduct \cmd{v~v'}{E}
\qquad\qquad\qquad\qquad
\cmd{v}{\tpcst}  \hreduct v 
\]
Note that the first rule is only needed if we want to interpret open programs, because execution only terminates
in commands of the form $\cmd{\lambda x.v}{\tpcst}$ and $\cmd{x}{E}$, and only the first of these can be closed.

The syntactic correspondence of Biernacka and Danvy \cite{Biernacka-Danvy:TOCL07} derives the Krivine machine from the small-step operational semantics of weak-head reduction by considering
them both as functional programs and applying a series of correctness-preserving program transformations.
The interpreter corresponding to the small-step semantics
``decomposes'' a term into an evaluation context and redex, reduces the redex, ``recompose'' the resulting term back into its context, and repeats this process until an answer is reached.
Recomposing and decomposing always happen in turns, and decomposing always undoes recomposing to arrive again at the same place,  so they can be merged into a single ``refocus'' function that
searches for the next redex in-place.  This non-tail recursive interpreter can then be made tail-recursive by inlining and fusing refocusing and reduction together.  Further simplifications and compression of intermediate transitions in the tail-recursive interpreter yields an implementation of the abstract machine.
Equivalence of the two semantic artifacts follows from the equivalence of the associated interpreters, which is guaranteed by construction due to the correctness of each program transformation used.
Note that we use $\reds$ and $\hreduct\!\!\!\!\!\!\to$ as the reflexive-transitive closures of $\reduct$ and $\hreduct$ respectively. 

\begin{theorem}[Equivalence of Krivine machine and small-step operational semantics]
\label{correspondence}
  For any $\lambda$-calculus terms $v,v'$ the following conditions are equivalent:
  \begin{enumerate}
  \item $v \dmapsto v'$ such that there is no $v''$ where $v' \mapsto v''$;
  \item there exists a command $c$ such that $\cmd{v}{\tpcst} \reds c \hreduct\!\!\!\!\!\!\to v'$
   where there is no $c'$ such that $c \reduct c'$.
%
  \end{enumerate}
\end{theorem}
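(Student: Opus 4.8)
The plan is to prove the two directions of the equivalence by exploiting the tight correspondence between Krivine-machine configurations and the ``plugging'' operation $E[v]$ that the authors have already described. The central bookkeeping device will be a decoding function that reads a command $\cmd{v}{E}$ back into the $\lambda$-term $E[v]$, using the reading $\tpcst \mapsto \Box$ and $v' \cdot E \mapsto E[\Box~v']$. I would first establish a \emph{simulation lemma} stating that the machine transition $\reduct$ exactly tracks the small-step reduction $\mapsto$ under this decoding: if $\cmd{v}{E} \reduct \cmd{v_1}{E_1}$ by the push rule $\cmd{v~v'}{E} \reduct \cmd{v}{v'\cdot E}$, then the decoded terms are literally equal, $E[v~v'] = (v'\cdot E)[v]$, so decomposition steps are silent; and if the transition is the $\beta$ rule $\cmd{\lambda x.v}{v'\cdot E} \reduct \cmd{v[v'/x]}{E}$, then the decoded terms are related by one genuine small-step $E[(\lambda x.v)~v'] \mapsto E[v[v'/x]]$.

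With this lemma in hand, the forward direction (1)$\Rightarrow$(2) proceeds by induction on the length of the reduction sequence $v \dmapsto v'$. At each $\mapsto$ step I would show the machine can mimic it by a burst of silent push transitions that search for the redex, followed by exactly one $\beta$ transition; the fact that the small-step semantics chooses the \emph{leftmost-outermost} redex via its evaluation-context grammar $E ::= \Box \mid E~v$ is precisely what guarantees the machine's push rule walks down to the same redex, so the decodings stay synchronized. When $v'$ is a normal form for $\mapsto$, I must argue the machine halts, i.e.\ reaches a command $c$ with no $c'$ such that $c \reduct c'$; the stuck commands are exactly $\cmd{\lambda x.v}{\tpcst}$ and $\cmd{x}{E}$, whose readbacks via $\hreduct\!\!\!\!\!\!\to$ yield back a weak-head normal form. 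For the converse (2)$\Rightarrow$(1), I would run the same simulation lemma in reverse: a halting machine run $\cmd{v}{\tpcst} \reds c$ projects, after deleting the silent push steps, onto a reduction sequence $v \dmapsto \mathrm{decode}(c)$, and the readback $c \hreduct\!\!\!\!\!\!\to v'$ simply recomposes $\mathrm{decode}(c)$, giving $v' = \mathrm{decode}(c)$; stuckness of $c$ then forces $v'$ to be irreducible under $\mapsto$.

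The two remaining technical obligations are a \emph{progress/halting correspondence}, that a command is $\reduct$-stuck if and only if its decoding is $\mapsto$-irreducible, and the observation that the readback relation $\hreduct$ is exactly the inverse of pushing and therefore reconstructs the decoded term faithfully. These follow by a straightforward case analysis on the shape of the head subterm ($x$, $\lambda x.v$, or $v~v'$) paired with the shape of the co-term ($\tpcst$ or $v'\cdot E$): the only configuration that is both stuck and not a readback-terminal is ruled out, matching the authors' remark that only $\cmd{\lambda x.v}{\tpcst}$ and $\cmd{x}{E}$ are terminal.

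The step I expect to be the main obstacle is making the ``silent push steps'' argument fully rigorous, namely proving that the machine's sequence of decomposition transitions always locates the \emph{same} redex that the meta-level decomposition of the small-step semantics would select. This is the crux of the syntactic correspondence: I would handle it by an auxiliary induction on the structure of the evaluation context $E$ (equivalently, on the spine of nested applications), showing that $\cmd{v}{E}$ reaches the leftmost-outermost redex of $E[v]$ after finitely many push steps, with the accumulated co-term faithfully encoding the surrounding context. Once this refocusing invariant is nailed down, the bidirectional simulation and the halting correspondence assemble into the stated equivalence.
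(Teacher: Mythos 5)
Your proposal is correct in substance, but it takes a genuinely different route from the paper. The paper does not prove Theorem~\ref{correspondence} by hand at all: it invokes Biernacka and Danvy's \emph{syntactic correspondence}, deriving the Krivine machine from the small-step semantics by a chain of correctness-preserving program transformations on the associated interpreters (merging decompose/recompose into a refocus function, making it tail-recursive, compressing transitions), so that the equivalence holds \emph{by construction} and no simulation argument is ever written out. You instead give a direct, self-contained bisimulation proof: a decoding $\cmd{v}{E} \mapsto E[v]$, a simulation lemma splitting machine transitions into silent push steps (decoding-preserving) and $\beta$ steps (tracking one $\mapsto$ step), a refocusing invariant showing the pushes locate the weak-head redex, and a stuck-state analysis matching $\cmd{x}{E}$ and $\cmd{\lambda x.v}{\tpcst}$ against $\mapsto$-irreducibility. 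What each buys: your proof makes explicit exactly the invariant (refocusing correctness) that the derivational method encapsulates, and requires no trust in a meta-framework; the paper's approach is reusable wholesale, which matters here because the same derivation is replayed several more times (for the head-reduction machines and the big-step semantics) without any new proof obligations, where your method would require redoing the simulation for each artifact. One refinement you should make: your stated progress obligation, that a command is $\reduct$-stuck \emph{iff} its decoding is $\mapsto$-irreducible, is false as literally phrased, since $\cmd{x~v'}{E}$ decodes to the irreducible $E[x~v']$ yet still takes a push step; the correct statement is that the decoding is irreducible iff the command reaches a stuck command by silent push steps alone (which terminate, as each push strictly shrinks the term component). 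Your assembled argument in fact only uses this corrected form, so the gap is in the statement of the lemma, not in the proof built on it.
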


\begin{figure}[t]
  \begin{mathpar}
    \inferrule{ }{x \Downarrow_{wh} x}~~~~
    \inferrule{ }{\lambda x.v \Downarrow_{wh} \lambda x.v}\\
    \inferrule{v_1 \Downarrow_{wh} \lambda x.v_1' \\ v_1'[v_2/x] \Downarrow_{wh} v}{v_1~v_2 \Downarrow_{wh} v}\\
    \inferrule{v_1 \Downarrow_{wh} v_1' \\ v_1' \not \equiv \lambda x.v}{v_1~v_2 \Downarrow_{wh} v_1'~v_2}
  \end{mathpar}
  \caption{Big-step semantics for weak-head reduction}
  \label{fig:weakheadbigstep}
\end{figure}

Let us now turn to the big-step weak-head semantics (see Figure \ref{fig:weakheadbigstep}).
It is not obvious that this semantics  corresponds to the small step semantics,
a proof of correctness is required.
Interestingly, Reynolds's functional correspondence 
\cite{Ager:2003:FCE:888251.888254, Reynolds:1972} 
links this semantics to the Krivine abstract machine by way of program transformations, similar to the connection between the small-step semantics and abstract machine.
The big-step semantics is represented as  a compositional interpreter which is then
converted into continuation-passing style and
defunctionalized (where higher-order  functions are replaced with data structures
which correspond to co-terms), yielding an interpreter representing the Krivine machine. Correctness follows
from construction and is expressed
analogously to Theorem \ref{correspondence} by replacing the small-step reduction (i.e.  $\dmapsto$) 
with the big-step (i.e. $\Downarrow$).

\begin{theorem}[Equivalence of Krivine machine and big-step operational semantics]
\label{bigcorrespondence}
  For any $\lambda$-calculus terms $v,v'$ the following conditions are equivalent:
  \begin{enumerate}
  \item $v  \Downarrow_{wh} v'$;
  \item there exists a command $c$ such that $\cmd{v}{\tpcst} \reds c \hreduct\!\!\!\!\!\!\to v'$
   where there is no $c'$ such that $c \reduct c'$.
    \end{enumerate}
\end{theorem}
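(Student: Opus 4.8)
The plan is to reduce the statement to the already-established Theorem~\ref{correspondence} rather than replaying the functional correspondence by hand. Observe that condition (2) of this theorem is \emph{identical} to condition (2) of Theorem~\ref{correspondence}; hence it suffices to prove the bridging lemma
\[
v \Downarrow_{wh} v' \quad\Longleftrightarrow\quad v \dmapsto v' \ \text{ and } v' \text{ is } \mapsto\text{-irreducible},
\]
since chaining it with the equivalence (1)$\Leftrightarrow$(2) of Theorem~\ref{correspondence} immediately yields the desired (1)$\Leftrightarrow$(2). Before the two directions I would record the routine observation that the $\mapsto$-irreducible terms are exactly the weak-head normal forms of Figure~\ref{fig:weakHeadNormalForms}: a term is reducible precisely when its head, reached by descending to the left through a chain of applications, is a $\lambda$-abstraction facing an argument.

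For the forward direction I would induct on the derivation of $v \Downarrow_{wh} v'$. The two axioms are immediate: they take zero $\mapsto$-steps, and a variable or $\lambda$-abstraction is already a whnf. For the $\beta$-rule the induction hypotheses give $v_1 \dmapsto \lambda x.v_1'$ and $v_1'[v_2/x] \dmapsto v$ with $v$ a whnf; the only fact needing verification is that $\mapsto$ is closed under the context $\Box~v_2$, i.e. that $v_1 \dmapsto \lambda x.v_1'$ entails $v_1~v_2 \dmapsto (\lambda x.v_1')~v_2$. This is exactly what the grammar $E ::= \Box \mid E~v$ buys, since $(E~v_2)[t] = (E[t])~v_2$, so each head step of $v_1$ lifts to a head step of $v_1~v_2$. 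Composing $v_1~v_2 \dmapsto (\lambda x.v_1')~v_2 \mapsto v_1'[v_2/x] \dmapsto v$ closes the case. The neutral-application rule uses the same context-closure argument together with the closure of neutral terms under application.

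For the backward direction I would induct on the length of $v \dmapsto v'$. The base case requires reflexivity of $\Downarrow_{wh}$ on whnfs, itself a short structural induction on neutrals via the neutral-application rule. The inductive step rests on the key \emph{head-expansion} lemma: if $v \mapsto v''$ in one step and $v'' \Downarrow_{wh} v'$, then $v \Downarrow_{wh} v'$. I expect this to be the main obstacle. Because a weak-head step fires the head redex beneath a spine of trailing arguments, $v = (\lambda x.w)~w'~u_1\cdots u_n$ with $v'' = (w[w'/x])~u_1\cdots u_n$, I would prove head-expansion by induction on $n$ (equivalently, on the evaluation context $E$), doing a case analysis on the \emph{last} rule of the derivation of $v'' \Downarrow_{wh} v'$. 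When $n=0$ the $\beta$-rule applies directly, using $\lambda x.w \Downarrow_{wh} \lambda x.w$ together with the given $w[w'/x] \Downarrow_{wh} v'$. When $n>0$, write $v = u~u_n$ with $u \mapsto u''$; whichever rule ($\beta$ or neutral) concluded $u''~u_n \Downarrow_{wh} v'$ exposes a premise $u'' \Downarrow_{wh} t$, to which the inner induction hypothesis applies to give $u \Downarrow_{wh} t$, and re-applying the very same rule reassembles $v \Downarrow_{wh} v'$.

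The delicate points are staging the outer induction on reduction length against the inner induction on the argument spine, and matching the two big-step rules to the shape of the reduct; determinism of $\mapsto$ is convenient but not actually needed, since both directions of the bridging lemma speak only of the \emph{existence} of a reduction to an irreducible form. Once the inductions are correctly set up, no individual step demands more than unfolding a definition, so the only genuine difficulty is the head-expansion lemma and its spine generalization.
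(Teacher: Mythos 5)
Your proposal is correct, but it takes a genuinely different route from the paper. The paper does not prove this theorem by syntactic induction at all: it obtains the equivalence from Reynolds's \emph{functional correspondence}, viewing the big-step semantics of Figure~\ref{fig:weakheadbigstep} as a compositional interpreter, CPS-transforming and defunctionalizing it so that the resulting interpreter \emph{is} the Krivine machine (continuations becoming the co-terms), with correctness holding ``by construction'' from the correctness of each program transformation. You instead chain the statement through Theorem~\ref{correspondence} and prove the bridging lemma that $v \Downarrow_{wh} v'$ iff $v \dmapsto v'$ with $v'$ $\mapsto$-irreducible --- a fact the paper never states explicitly --- via a forward induction on the derivation using closure of $\mapsto$ under the context $\Box\,v_2$, and a backward induction on reduction length whose crux is the head-expansion lemma, itself handled by an inner induction on the argument spine with a case analysis on the last rule of the derivation. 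Your argument is sound: the characterization of $\mapsto$-irreducible terms as the weak-head normal forms of Figure~\ref{fig:weakHeadNormalForms} is right, the spine decomposition $v=(\lambda x.w)\,w'\,u_1\cdots u_n$ matches the context grammar exactly, the $n>0$ case correctly reassembles the conclusion by re-applying whichever rule ($\beta$ or neutral) closed the derivation, and your observation that determinism of $\mapsto$ is not needed is accurate. What each approach buys: the paper's derivational method requires no inductions, scales uniformly to every other semantics in the paper, and serves its methodological theme that all the artifacts are inter-derivable by mechanical transformation; your proof is elementary and self-contained (modulo Theorem~\ref{correspondence}), makes the big-step/small-step bridge and the head-expansion lemma explicit and independently reusable, but is specific to this pair of semantics and would have to be redone for each of the paper's later machines, which the correspondence technique handles for free.
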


In conclusion, we have seen three different semantics artifacts,  small-step, big-step and an
abstract machine, which thanks to the syntactic and functional correspondence 
define the \emph{same} language. 
Our goal is  to provide the three different styles of semantics for a different
notion of final result: \emph{head normal forms} (hnf for short) (see 
Figure \ref{fig:headNormalForms}).
Note that head reduction unlike weak-head reduction allows execution under a lambda abstraction, e.g.
 $\lambda x. (\lambda z.z)x$ is a whnf but is  not a hnf, whereas
$\lambda x.(x (\lambda z.z)x)$ is both a whnf and a hnf. 

\begin{figure}[t]
  \begin{align*}
      \mbox{N} \in \textrm{Neutral} &::= x \mid \mbox{N}~ v\\
    \mbox{HNF} \in \textrm{HeadNormalForms} &::= \mbox{N} \mid \lambda x.\mbox{HNF}
  \end{align*}
  \caption{Head Normal Forms}
  \label{fig:headNormalForms}
\end{figure}

\section{Functions as Pattern-Matching on the Calling Context}
\label{functionsaspatternmatching}
We have seen how, in the Krivine machine,
 evaluation contexts take the form of a call-stack consisting of a list
of arguments to the function being evaluated.
Inspired by this direct representation of contexts, we can enhance the language with the ability to give a name
to the calling context, analogous to naming terms
with a {\sf let}-construct. We introduce a new sort of variables (written using greek letters $\alpha,\beta,\ldots$), called \emph{co-variables},
which name co-terms, 
and a new abstraction $\mu \alpha . c$.
Operationally, a $\out$-term captures its evaluation context
\[ \cmd{\out \alpha.c}{E} \reduct c[E/\alpha] \]
and substitutes it in for the variable $\alpha$ in the associated command.
This is analogous to the evaluation of the term
$\texttt{let}~x = v ~ \texttt{in}~v'$, where $v$ is substituted for each occurrence of $x$
in $v'$. 
Interestingly, even though the $\mu$ rule seems very different from the
application rule they are indeed very similar, when seen side-by-side:
\[ \begin{array}{lll}
\cmd{\out \alpha.c}{E} & \reduct  & c[E/\alpha] \\
  \cmd{\lambda x.v}{v' \cdot E} &\reduct & \cmd{v[v'/x]}{E}
\end{array}
\]
Note that they both inspect  the context or co-term. Similar to the way a $\mu$-term corresponds to
a {\sf let}-term,
the lambda abstraction corresponds to a term of the form 
$\texttt{let}~ (x,y)= v ~ \texttt{in}~ v'$, which decomposes $v$ while naming its sub-parts.
So in contrast to the $\mu$-term,
the lambda abstraction decomposes the co-term instead of just naming it. To emphasize this view we write a function
as a special form of $\mu$-abstraction which  \emph{pattern-matches} on the context. The term
$\caseV[(x \cdot \alpha).c]$ gives names to both its argument $x$ and the remainder of its context $\alpha$ in the
command $c$. Operationally, $\caseV[(x \cdot \alpha).c]$
can be thought of as waiting for a context $v \cdot E$ at which point
computation continues in $c$ with $v$ substituted in for $x$ and $E$
substituted in for $\alpha$.
This view emphasizes that a function is given
primarily by how it is used rather than how it is defined; 
 the call-stack formation operator $\cdot$ is 
the most important aspect in the theory of functions (rather than
lambda abstraction).  
However, the two views of functions are equivalent.  We can write the lambda
abstraction $\lambda x.v$ as $\caseV[(x \cdot \alpha).\cmd{v}{\alpha}]$,
given $\alpha$ not free in $v$.  The application rule of the Krivine machine
can be clearly seen as an example of pattern-matching when written
in this style.
\[ \cmd{\lambda x.v}{v' \cdot E} = \cmd{\caseV[(x \cdot \alpha).\cmd{v}{\alpha}]}{v' \cdot E} \reduct \cmd{v[v'/x]}{E} \]
Similarly, we can also write $\caseV[(x \cdot \alpha).c]$ as $\lambda x.\out \alpha.c$,
making it clear that these two formulations are the same.  Interestingly, abstraction over co-terms is the only necessary ingredient to realizing control operations.
We thus arrive at an extension of the Krivine machine with control given in Figure \ref{fig:krivinemachinepluscontrol}.

\begin{figure}[t]
{
\[
\begin{array}{l}
\begin{array}{ll}
  c \in \textrm{Commands} &::= \cmd{v}{E} \\ 
  v \in \textrm{Terms} &::= x \mid v~v' \mid \caseV[(x \cdot \alpha).c] \mid \mu \alpha.c \\
  E \in \textrm{CoTerms} &::= \alpha \mid v \cdot E \mid  \tpcst   \\
\end{array} \\
\\
\begin{array}{lll}
  \cmd{v~v'}{E} & \reduct & \cmd{v}{v' \cdot E}\\
  \cmd{\out \alpha.c}{E} &\reduct & c[E/\alpha] \\
  \cmd{\caseV[(x \cdot \alpha).c]}{v \cdot E} & \reduct  & c[E/\alpha,v/x] 
\end{array}\end{array}
\]
}
\caption{Krivine abstract machine for $\lambda$-calculus with control}
\label{fig:krivinemachinepluscontrol}
\end{figure}

\subsection{Surjective Call Stacks}

There are two different ways to take apart tuples in programming languages.
The first, as we've seen, is to provide functionality to decompose a tuple by
matching on its structure, as in the pattern-matching {\sf let}-term $\texttt{let}~(x,y)=v'~\texttt{in}~v$.  By pattern-matching, $\texttt{let}~(x,y)=(v_1,v_2)~\texttt{in}~v$ evaluates to $v$
with $v_1$ and $v_2$ substituted for $x$ and $y$, respectively.
The second is to provide primitive projection operations for accessing the components
of the tuple, as in $\texttt{fst}(v)$ and $\texttt{snd}(v)$.
The operation $\texttt{fst}(v_1,v_2)$ evaluates to $v_1$ and $\texttt{snd}(v_1,v_2)$
evaluates to $v_2$.  These two different views on tuples are equivalent in a sense.
The $\texttt{fst}$ and $\texttt{snd}$ projections can be written in terms of pattern-matching
\[
\texttt{fst}(z) = (\texttt{let} ~ (x,y) = z ~ \texttt{in} ~ x)
\qquad\qquad
\texttt{snd}(z) = (\texttt{let} ~ (x,y) = z ~ \texttt{in} ~ y)
\]
and likewise, ``lazy'' pattern-matching can be implemented in terms of projection operations
\[
\texttt{let}~(x,y)=v'~\texttt{in}~v \reduct v[\texttt{fst}(v')/x,\texttt{snd}(v')/y]
\]
The projective view of tuples has the advantage of a simple interpretation of
extensionality, that the tuple made from the parts of another tuple is the same,
by the \emph{surjectivity} law for pairs:
$v=(\texttt{fst}(v),\texttt{snd}(v))$.

By viewing functions as pattern-matching constructs in a programming language,
analogous to pattern-matching on tuples,
we likewise have another interpretation of functions based on projection.
That is, we can replace lambda abstractions or pattern-matching with projection
operations, $\car{E}$ and $\cdr{E}$, for accessing the components of a calling context.
The operation $\car{v \cdot E}$ evaluates to the argument $v$ and $\cdr{v \cdot E}$
evaluates to the return context $E$.
Analogously to the different views on tuples, this projective view on functional contexts
can be used to implement pattern-matching:
\[
\cmd{\caseV[(x \cdot \alpha).c]}{E}
\reduct
c[\car{E}/x,\cdr{E}/\alpha]
\]
And since lambda abstractions can be written in terms of pattern-matching,
they can also be implemented in terms of $\car{-}$ and $\cdr{-}$:
\[
\cmd{\lambda x.v}{E}
=
\cmd{\caseV[(x \cdot \alpha).\cmd{v}{\alpha}]}{E}
\reduct
\cmd{v[\car{E}/x]}{\cdr{E}}
\]
This projective view of functions has been previously used in Nakazawa and Nagai's $\Lambda\mu_{\textrm{cons}}$-calculus~\cite{lambdamucons} to establish confluence in a call-by-name $\lambda$-calculus with control.
In this setting, extensional reasoning is captured by a surjectivity law on co-terms that
a co-term is always equal to the call-stack formed from its projections,
analogous to the law for surjective pairs:
$E = \car{E} \cdot \cdr{E}$.
Note that the equational soundness of the $\eta$-respecting translation of the pattern-matching lambda into projection 
has also been discovered in the context of the sequent calculus \cite{Her05,Munch13PhD}.
Therefore, the evaluation contexts of extensional functions are \emph{surjective call-stacks}.

In the case where the co-term is $v \cdot E$, the reduction of pattern-matching into projection
justifies our existing reduction rule by performing reduction inside of a command:
\[
\cmd{\caseV[(x \cdot \alpha).c]}{v \cdot E}
\reduct
c[\car{v \cdot E}/x,\cdr{v \cdot E}/\alpha]
\reds
c[v/x,E/\alpha]
\]
However, when working with abstract machines we want to keep reduction at the top of a program,
therefore we opt to keep
using the rule which combines both steps into one. 
Instead, rewriting lambda abstractions as projection
suggests what to do in the case where $E$ is \emph{not} a stack extension.
Specifically, in the case where $E$ is the top-level constant $\tpcst$ we have 
the following rule
\[ \cmd{\caseV[(x \cdot \alpha).c]}{\tpcst} \reduct c[\car{\tpcst}/x,\cdr{\tpcst}/\alpha] \]
which has no equivalent in the Krivine machine where the left-hand side of this reduction is stuck.

\begin{figure}[t]
  {
    \[\begin{array}{l}
    \begin{array}{ll}
      c \in \textrm{Commands} &::= \cmd{v}{E} \\ 
      v \in \textrm{Terms} &::= x \mid v~v' \mid \out \alpha.c \mid  \caseV[(x \cdot \alpha).c] \mid \car{S} \\
      E \in \textrm{CoTerms} &::= \alpha \mid v \cdot E \mid S  \\
      S \in \textrm{StuckCoTerms} &::= \tpcst \mid \cdr{S}\\
    \end{array}\\
    \\
    \begin{array}{lll}
    \cmd{v~v'}{E} & \reduct & \cmd{v}{v' \cdot E}\\
      \cmd{\out \alpha.c}{E} & \reduct  &c[E/\alpha] \\
      \cmd{\caseV[(x \cdot \alpha).c]}{v \cdot E} & \reduct & c[E/\alpha,v/x] \\
      \cmd{\caseV[(x \cdot \alpha).c]}{S} & \reduct  & c[\car{S}/x,\cdr{S}/\alpha]
    \end{array}
    \end{array}
    \]
  }
   \caption{Head reduction abstract machine for $\lambda$-calculus with control (projection based)}
  \label{fig:krivinemachinepluscontrolandprojections}
\end{figure}

We thus arrive at another abstract machine in Figure \ref{fig:krivinemachinepluscontrolandprojections} which works just like the Krivine machine with control except that now we
have a new syntactic sort of stuck co-terms.  The idea behind stuck co-terms is that $E$ is stuck if $\cdr{E}$ does not evaluate further.  For example, the co-term $\cdr{\cdr{\tpcst}}$ is stuck, but $\cdr{v \cdot \tpcst}$ is not.  Note, however, that we do not consider co-variables to be stuck co-terms since they may not continue
to be stuck after substitution.
For instance, $\cdr{\alpha}[v \cdot \tpcst/\alpha]=\cdr{v \cdot \tpcst}$ is not stuck, so
neither is $\cdr{\alpha}$.
This means that we have no possible reduction for the command $\cmd{\caseV[(x\cdot\beta).c]}{\alpha}$, but this is not
a problem since we assume to work only with programs which do not have free co-variables.
Further, because we syntactically restrict the use of the projection to stuck co-terms, we
do not need reduction rules like $\car{v \cdot E} \reduct v$ since $\car{v \cdot E}$ is not syntactically well formed in our machine.
Intuitively, anytime we would have generated $\car{v \cdot E}$ or $\cdr{v \cdot E}$ we instead eagerly perform the projection reduction in the other rules.

With the projection based approach, we are in a situation where there is always a reduction rule which can fire at the top of a command,
except for commands of the form $\cmd{x}{E}$ or $\cmd{\car{S}}{E}$, which are done,
or $\cmd{\caseV[(x \cdot \beta).c]}{\alpha}$ which is stuck on a free co-variable.
Specifically, we are no longer stuck when evaluating a pattern-matching function term
at the top-level,
i.e. $\cmd{\caseV[(x \cdot \alpha).c]}{\tpcst}$.
As a consequence of this, reduction of co-variable closed commands now respects $\eta$.
If we have a command of the form
$ \cmd{\caseV[(x\cdot \alpha).\cmd{v}{x \cdot \alpha}]}{E} $
with no free co-variables and where $x$ and $\alpha$ do not appear free in $v$, there are two possibilities depending on the value of $E$.
Either $E$ is a call-stack $v' \cdot E'$, so we $\beta$ reduce
\[ \cmd{\caseV[(x \cdot \alpha).\cmd{v}{x \cdot \alpha}]}{v' \cdot E'} \reduct \cmd{v}{v' \cdot E'} \]
or $E$ must be a stuck co-term, so we reduce by splitting it with projections
\[ \cmd{\caseV[(x \cdot \alpha).\cmd{v}{x \cdot \alpha}]}{S} \reduct \cmd{v}{\car{S} \cdot \cdr{S}} \]
meaning that we can continue to evaluate $v$.
Thus, the use of projection out of surjective call-stacks offers a way of implementing call-by-name reduction while also
respecting $\eta$.

\begin{figure}[t]
{
\[
\begin{gathered}
\begin{aligned}
  c \in \textrm{Command} &::= \cmd{v}{E} \\
  v \in \textrm{Terms} &::= x \mid v~v \mid \lambda x.v \mid \car{S} \\
  E \in \textrm{CoTerms} &::= v \cdot E \mid S \\
  S \in \textrm{StuckCoTerms} &::= \tpcst \mid \cdr{S}
\end{aligned} \\ \\
\begin{array}{lll}
  \cmd{v~v'}{E} &\reduct & \cmd{v}{v' \cdot E} \\
    \cmd{\lambda x.v}{v' \cdot E} & \reduct  & \cmd{v[v'/x]}{E} \\
  \cmd{\lambda x.v}{S} 
  &\reduct & \cmd{v[\car{S}/x]}{\cdr{S}}
\end{array}
\end{gathered}
\]
}
\caption{
Head reduction abstract machine for $\lambda$-calculus (projection based)} 
\label{fig:headreductionProjections}
\end{figure}

\section{Head Evaluation: Small and Big-step Operational Semantics and Abstract Machines} 
\label{smallbigamhead}

We have considered  the Krivine machine with control to get an intuition about dealing with
co-terms, however, our primary interest is to work with the pure $\lambda$-calculus.  Therefore,
from the Krivine machine with control and projection, we  derive an abstract machine for the pure $\lambda$-calculus which performs head reduction (see 
Figure \ref{fig:headreductionProjections}). Observe that the only co-variable
 is $\tpcst$, which represents the ``top-level'' of the program.
 Here we use $\car{-}$ and $\cdr{-}$ (as well as the top-level context $\tpcst$) as part
of the implementation since they can appear in intermediate states of the machine.  However, we still
assume that the programs being evaluated are pure lambda terms.
Observe that the projection based approach works just like the original Krivine machine (see Figure  \ref{fig:krivinemachine}), except in the case where we need to reduce
a lambda in a context that is not manifestly a call-stack ($\cmd{\lambda x.v}{S}$), and in that situation we
continue evaluating the body of the lambda.  To avoid the problem of having free variables,
we replace a variable $(x)$ with the projection into the context $(\car{S})$ and indicate that we
are now evaluating under the binder with the new context $(\cdr{S})$.

Not all commands derivable from the grammar of
Figure \ref{fig:headreductionProjections} are sensible; for example,
$\cmd{\car{\tpcst}}{\tpcst}$ and
$\cmd{x}{\car{\cdr{\tpcst}}.\cdr{\tpcst}}$ are not. Intuitively, the presence of $\car{\tpcst}$ means that
reduction has gone under a lambda abstraction so the co-term needs
to witness that fact by terminating in $\cdr{\tpcst}$, making $\cmd{\car{\tpcst}}{\cdr{\tpcst}}$ a legal command.
Analogously, the presence of $\car{\cdr{\tpcst}}$ indicates that reduction has gone under two
lambda abstractions and therefore the co-term needs
to terminate in $\cdr{\cdr{\tpcst}}$ making $\cmd{x}{\car{\cdr{\tpcst}}.\cdr{\cdr{\tpcst}}}$ a legal command.
To formally define the notion of a legal command, we make use of the notation $\cdnr{-}$ for $n$ applications of the
$\texttt{cdr}$ operation, and we write $\cdnr{\tpcst} \leq \cdmr{\tpcst}$ if $n \leq m$ and similarly $\cdnr{\tpcst} < \cdmr{\tpcst}$ if $n < m$.  We will only consider legal commands, and obviously reduction preserves legal commands.
\begin{definition}
A command of the form $\cmd{v_1}{v_2 \cdots v_n \cdot S}$ is legal if and only if
for $1 \leq i \leq n$ and for every $\car{S'}$ occurring  in $v_i$, $S' < S$.
\end{definition}


As we saw for the Krivine machine, we have an associated readback function with an additional rule
 \[  \cmd{v}{\cdr{S}} \hreduct \cmd{\lambda x.v[x/\car{S}]}{S} \]
for
extracting resulting $\lambda$-calculus terms after reduction has terminated, where
$v[x/\car{S}]$ is understood as replacing every occurrence of $\car{S}$ in $v$ with $x$ (which is assumed to be fresh).
The readback relation is justified by reversing the direction of the reduction
$ \cmd{\lambda x.v}{S} \reduct \cmd{v[\car{S}/x]}{\cdr{S}}$.
\begin{example}
\label{tracen}
If we start with the term $\lambda x.(\lambda y.y)~x$, we get an evaluation trace
\begin{align*}
\cmd{\lambda x.(\lambda y.y)~x}{\tpcst} & \reduct \cmd{(\lambda y.y)~\car{\tpcst}}{\cdr{\tpcst}} \\
& \reduct \cmd{\lambda y.y}{\car{\tpcst} \cdot \cdr{\tpcst}} \\
& \reduct \cmd{\car{\tpcst}}{\cdr{\tpcst}} \\
& \hreduct \cmd{\lambda x.x}{\tpcst} \\
& \hreduct \lambda x.x 
\end{align*}
which reduces $\lambda x.(\lambda y.y)~x$ to $\lambda x.x$.  
This corresponds to the intuitive idea that head reduction performs
weak-head reduction until it encounters a lambda, at which point it recursively
performs head reduction on the body of that lambda.
\end{example}
\begin{figure}[t]
  {
    \[
    \begin{gathered}
      \begin{aligned}
        t \in \textrm{TopTerms} &::= v \mid \lambda .t\\
        v \in \textrm{Terms} &::= x \mid v~v \mid \lambda x.v \mid i \\
        i \in \textrm{Indices} &:= \textrm{zero} \mid \textrm{succ}(i) \\
        E \in \textrm{Contexts} &::= E~v \mid \Box \\
        S \in \textrm{TopLevelContexts} &::= \lambda .S \mid \Box
      \end{aligned}
      \begin{aligned}
        Count & : \textrm{TopLevelContexts} \to \textrm{Indices}\\
        Count(\Box) &= \textrm{zero} \\
        Count(\lambda .S) &= \textrm{succ}(Count(S))
      \end{aligned}\\
      \\
      \begin{aligned}
        S[E[(\lambda x.v)~v']] &\mapsto  S[E[v[v'/x]]] \\
        S[\lambda x.v] &\mapsto  S[\lambda .v[Count(S)/x]]
      \end{aligned}
    \end{gathered}
    \]
  }
  \caption{Small-step head reduction corresponding to the  projection based abstract machine} 
  \label{fig:headreductionderived}
\end{figure}

Applying  Biernacka and Danvy's  syntactic correspondence, we reconstruct the small-step operational semantics of  Figure \ref{fig:headreductionderived}.
Because we want to treat contexts and terms separately, we create a new syntactic category of indices which
replaces the appearance of $\car{S}$ in terms, since stuck co-terms correspond to top-level contexts.
The function $Count(-)$ is used for converting between top-level contexts and indices. 
Note that, as we now include additional syntactic objects beyond the pure $\lambda$-calculus,
we need a readback relation just like in the abstract machine:
\[S[\lambda.v] \hreduct S[\lambda x.v[x/Count(S)]]\]

\begin{example}
Corresponding to the abstract machine execution  in Example \ref{tracen}  we have:
\begin{align*}
\lambda x.(\lambda y.y)~x& \mapsto  &  \mbox{ where } S =  \Box\\
\lambda .(\lambda y.y)~\textrm{zero}  & \mapsto & \mbox{ where } S = \lambda . \Box \mbox{ and } E = \Box\\
\lambda . \textrm{zero}  & \hreduct  \\
\lambda x.x 
\end{align*} 
Here, the context $\lambda .\Box~\textrm{zero}$ corresponds to the 
co-term $\car{\tpcst} \cdot \cdr{\tpcst}$. 
\end{example}

Because abstract machine co-terms correspond to inside out contexts, we can not just define evaluation contexts as
\[ E \in \textrm{Contexts} ::= E~v \mid S \]
which would make $((\lambda .S)~v)$ a context which
does not correspond to any co-term (and would allow for reduction under binders).
Indeed, the variable-free version of lambda abstraction is 
only allowed to occur at the top of the program.
Thus, composed contexts of the form $S[E[\Box]]$ serve as the exact equivalent of co-terms.
Analogously to the notion of legal commands, we have the notion of legal top-level terms, and we will only consider legal terms.

\begin{definition}
  A top-level term $t$ of the form $S[v]$   is legal if and only if for every index $i$ 
  occurring  in $v$, $i < Count(S)$.
\end{definition}

This operational semantics has the virtue of being reconstructed directly from
the abstract machine, which automatically gives a correctness result
analogous to Theorem \ref{correspondence}.

\begin{theorem}[Equivalence of small-step semantics and abstract machine based on projections]
  For any index free terms $v,v'$ the following conditions are equivalent:
  \begin{enumerate}
  \item $v \dmapsto t \hreduct\!\!\!\!\!\! \to v'$ such that there is no $t'$ where $t \mapsto t'$;
  \item there exists a command $c$ such that $\cmd{v}{\tpcst} \reds c \hreduct\!\!\!\!\!\! \to  v'$ where there is no $c'$ such that $c \red c'$.
  \end{enumerate}
\end{theorem}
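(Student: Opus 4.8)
\section*{Proof proposal}

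The plan is to exploit the fact that the small-step semantics of Figure~\ref{fig:headreductionderived} was \emph{derived} from the projection-based abstract machine of Figure~\ref{fig:headreductionProjections} by running Biernacka and Danvy's syntactic correspondence in reverse. Since every program transformation in that correspondence (refocusing, lightweight fusion, and transition compression) is meaning-preserving, the equivalence holds essentially by construction, exactly as Theorem~\ref{correspondence} does for weak-head reduction. What remains is to make this correspondence explicit enough to read off the stated theorem, which I would do by setting up a translation between machine configurations and top-level terms and checking that it is a bisimulation up to the machine's search steps.

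First I would define the translation $\llbracket - \rrbracket$ sending co-terms to the composed contexts $S[E[\Box]]$ promised in the text: $\llbracket \tpcst \rrbracket = \Box$, $\llbracket \cdr{S} \rrbracket = \lambda.\llbracket S \rrbracket$, and $\llbracket v \cdot E \rrbracket = \llbracket E \rrbracket[\Box~\llbracket v \rrbracket]$, together with the obvious homomorphic translation on terms whose only interesting clause replaces each projection $\car{S}$ by the index $Count(\llbracket S \rrbracket)$. A command $\cmd{v}{E}$ is then sent to the top-level term $\llbracket E \rrbracket[\llbracket v \rrbracket]$, which always factors as an evaluation context nested inside a top-level context, i.e. the promised form $S[E[\Box]]$. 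I would check two routine structural lemmas: that this translation sends legal commands to legal top-level terms, and that it commutes with substitution, i.e. $\llbracket v[v'/x] \rrbracket = \llbracket v \rrbracket[\llbracket v' \rrbracket/x]$; the latter is what makes the $\beta$ and under-$\lambda$ cases go through.

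Next I would establish the bisimulation. The search rule $\cmd{v~v'}{E} \reduct \cmd{v}{v' \cdot E}$ is \emph{invisible} under the translation: both sides map to $\llbracket E \rrbracket[\llbracket v \rrbracket~\llbracket v' \rrbracket]$, so a search step merely relocates the hole without changing the denoted term. Each of the two genuine reduction rules, on the other hand, maps to exactly one small step: the $\beta$-rule corresponds to $S[E[(\lambda x.v)~v']] \mapsto S[E[v[v'/x]]]$ and the under-$\lambda$ rule to $S[\lambda x.v] \mapsto S[\lambda.v[Count(S)/x]]$, where the arithmetic of $Count$ and the extra $\lambda$ contributed by $\cdr{S}$ match up precisely. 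For the converse direction I would show that, given any legal command, iterating the search rule terminates (it strictly decreases the size of the focused term) and exposes either a redex or a stuck head, so the machine's incremental search realizes exactly the unique meta-level decomposition used by $\mapsto$. Finally I would match the stopping conditions and readback: a command is irreducible iff its focus is a variable or a projection $\car{S}$, which the translation sends to an irreducible top-level term (one whose spine head is a variable or an index), and the machine readback $\cmd{v}{\cdr{S}} \hreduct \cmd{\lambda x.v[x/\car{S}]}{S}$ corresponds step-for-step to $S[\lambda.v] \hreduct S[\lambda x.v[x/Count(S)]]$. Composing these observations, a reduction $v \dmapsto t$ of length $n$ lifts to a machine run $\cmd{v}{\tpcst} \reds c$ of $n$ reduction steps interleaved with search steps (and conversely), with $\llbracket c \rrbracket = t$ and matching readbacks, which is the claimed equivalence.

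The main obstacle I expect is the converse/simulation direction: proving that the machine's step-by-step search via the first rule navigates to precisely the same redex that the small-step semantics isolates by its meta-level decomposition into $S[E[r]]$. This hinges on a unique-decomposition lemma for legal top-level terms, namely that every non-normal legal $t$ factors uniquely as a top-level context around an evaluation context around a redex, and on verifying that the invisibility of search steps is compatible with the legality invariants, in particular that indices and the counts $Count(S)$ stay synchronized as the machine descends under binders. Everything else is bookkeeping that the syntactic correspondence already guarantees.
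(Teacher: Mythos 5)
Your proposal takes essentially the same route as the paper: the paper gives no explicit proof of this theorem, saying only that the equivalence ``automatically'' follows by construction because the operational semantics of Figure~\ref{fig:headreductionderived} was reconstructed from the machine of Figure~\ref{fig:headreductionProjections} via Biernacka and Danvy's correctness-preserving transformations, which is exactly your opening paragraph. Your explicit translation and bisimulation --- search steps invisible under the translation, the $\beta$ and under-$\lambda$ machine rules each mapping to one $\mapsto_{S}$ step with $\car{S}$ sent to $Count(S)$, and the readback rules matching --- is a correct elaboration of what the syntactic correspondence guarantees internally (your translation agrees with the paper's stated correspondence, e.g.\ $\car{\tpcst} \cdot \cdr{\tpcst}$ denoting the context $\lambda.\Box~\mathrm{zero}$), so it goes beyond the paper only in degree of detail, not in method.
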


However, while, in our opinion, the associated abstract machine was extremely elegant, this
operational semantics seems unnecessarily complicated.
Fortunately, we can slightly modify it to achieve a much simpler presentation.
At its core, the cause of the complexity of the operational semantics is the use of
indices for top-level lambdas.
\begin{figure}[t]
  {
    \[
    \begin{gathered}
      \begin{aligned}
        v \in \textrm{Terms} &::= x \mid v~v \mid \lambda x.v \\
        E \in \textrm{Contexts} &::= E~v \mid \Box \\
        S \in \textrm{TopLevelContexts} &::= E \mid \lambda x.S
      \end{aligned} \\\\
      S[(\lambda x.v)~v'] \mapsto S[v[v'/x]]
    \end{gathered}
    \]
  }
\caption{Small-step head reduction}
\label{fig:headreduction}
\end{figure}
A simpler operational semantics would only use named variables (see
Figure \ref{fig:headreduction}). To show that the two semantics are indeed equivalent we
make use of  Sabry and Wadler's  \emph{reduction correspondence} \cite{DBLP:journals/toplas/SabryW97}.  For readability, we
write $\mapsto_S$ for the evaluation according to Figure \ref{fig:headreductionderived}
and $\mapsto_T$ for the evaluation according to Figure  \ref{fig:headreduction}.
We define the translations
$*$ and $\#$ from top-level terms, possibly containing indices, to pure lambda-calculus terms, and from
pure lambda terms to top-level terms, respectively. More specifically,
$*$ corresponds to the normal form of the readback rule and $\#$ to the normal form
with respect to non-$\beta$ $\mapsto_S$ reductions.

\begin{theorem}
The reduction systems $\dmapsto_S$ and $\dmapsto_T$ are sound and complete with respect to each other:
\begin{enumerate}
\item
$t \dmapsto_S t'$ implies $t^* \dmapsto_T t'^*$;
\item
$  v\dmapsto_T v'$ implies $v^\#  \dmapsto_S v'^\#$;
\item
$t \dmapsto_S t^{*\#}$ and $v^{\#*} = v$.
\end{enumerate}
\end{theorem}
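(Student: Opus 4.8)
The plan is to follow Sabry and Wadler's reduction-correspondence method: pin down the two translations as normal forms, reduce each clause to a single-step claim, and then lift to the reflexive-transitive closure. By construction $t^{*}$ is the readback normal form of $t$---repeatedly applying $S[\lambda.v] \hreduct S[\lambda x.v[x/Count(S)]]$ until no indexed top-level lambda remains---so $t^{*}$ is a pure, index-free term. Dually, $v^{\#}$ is the normal form of $v$ under the non-$\beta$ rule $S[\lambda x.v] \mapsto_S S[\lambda.v[Count(S)/x]]$, which indexes the entire leading lambda spine of $v$. Before the main argument I would record two commutation lemmas: that $*$ commutes with substitution of named variables, $(v[v'/x])^{*} = v^{*}[v'^{*}/x]$ (since $*$ only rewrites indices while $x$ is a genuine variable), and the symmetric fact for $\#$ on subterms lying strictly below the top spine.

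I would establish clause (3) first, since it underpins the simulation arguments. The equation $v^{\#*} = v$ holds up to $\alpha$-equivalence by induction on the length of the leading lambda spine of $v$: $\#$ rewrites each top-level $\lambda x$ to $\lambda.$ with $x$ replaced by its level $Count(S)$, and readback reverses exactly this step, reintroducing a fresh variable for that level. For $t \dmapsto_S t^{*\#}$ I would show that $t^{*\#}$ is precisely the non-$\beta$ normal form of $t$: $*$ turns every existing index back into a named lambda and $\#$ then re-indexes the maximal leading spine, so the composite simply forces every top-level lambda of $t$---originally indexed or named---into indexed form. Since the non-$\beta$ rule performs exactly this pushing and terminates, $t$ reaches $t^{*\#}$ by finitely many non-$\beta$ $\mapsto_S$ steps, and the legality invariant (body indices strictly below $Count(S)$) keeps the level arithmetic consistent at each step.

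For soundness, clause (1), it suffices to handle one $\mapsto_S$ step. A $\beta$ step $S[E[(\lambda x.v)~v']] \mapsto_S S[E[v[v'/x]]]$ maps under $*$ to a single $\mapsto_T$ $\beta$ step: readback turns the indexed spine into a named top-level context while leaving the evaluation context $E$ and the named redex intact, and the substitution lemma supplies $(v[v'/x])^{*} = v^{*}[v'^{*}/x]$. A non-$\beta$ step is administrative: $t$ and $t'$ share a readback normal form, so $t^{*} = t'^{*}$ and the step contributes zero $\mapsto_T$ steps---this is the $*$-invariance I would isolate as its own lemma. For completeness, clause (2), a single $\mapsto_T$ $\beta$ step $S[(\lambda x.v)~v'] \mapsto_T S[v[v'/x]]$ is simulated under $\#$ by one $\mapsto_S$ $\beta$ step followed by zero or more non-$\beta$ steps: $\#$ indexes the leading lambdas of $S$, the $S$-machine performs the matching contraction, and the trailing non-$\beta$ steps re-index any top-level lambdas newly exposed by the contraction (possible only when the redex sits at the very top, $E = \Box$, and the contractum begins with a lambda), landing on $v'^{\#}$ by the normal-form characterization from clause (3).

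The main obstacle I anticipate is the de-Bruijn-level bookkeeping in completeness: I must show that the trailing non-$\beta$ steps assign exactly the levels prescribed by $v'^{\#}$, and that this is forced rather than merely possible, which rests on the legality invariant together with how $\#$ interacts with substitution into the contractum. The companion delicate point is proving $*$-invariance under non-$\beta$ reduction with full care for the freshly introduced level---so that $[Count(S)/x]$ followed by readback recovers the original named lambda up to $\alpha$. Once these two lemmas and the commutation lemmas are in place, the remaining steps are routine case analyses closed under reflexive-transitive closure.
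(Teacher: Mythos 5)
Your proposal is correct and follows essentially the same route as the paper's proof: both use the Sabry--Wadler reduction correspondence, characterize $*$ and $\#$ as normal forms of readback and of non-$\beta$ $\mapsto_S$ reduction respectively, exploit the inverse relationship between the non-$\beta$ rule and readback, and establish the simulations by showing a $\beta$ step of $\mapsto_S$ maps to one $\mapsto_T$ step while a non-$\beta$ step is $*$-invariant (zero steps), with clause (3) resting on the same normal-form facts. Your only deviation is a welcome refinement: you make explicit that a $\mapsto_T$ step may require trailing administrative non-$\beta$ steps under $\#$ (when the contractum exposes new top-level lambdas), a point the paper compresses into ``similar to the above.''
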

\begin{proof}
  First, we observe that the non-$\beta$ reduction of $\mapsto_S$ is inverse to
  the readback rule, so that for any top-level terms $t$ and $t'$,
  $t \mapsto_S t'$ by a non-$\beta$ reduction if and only if $t' \hreduct t$.
  Second, we note that for any top-level context
  $S=\lambda\dots\lambda.\Box$ and term $v$ of Figure
  \ref{fig:headreductionderived},
  there is a top-level context $S'=\lambda x_0\dots\lambda x_n.\Box$
  of Figure \ref{fig:headreduction} and substitution
  $\sigma=[x_0/0,\dots,x_n/n]$ such that
  $(S[v])^*=S'[v^\sigma]$ by induction on $S$. Similarly, for any top-level
  context $S$ of Figure \ref{fig:headreduction} and neutral term $N$, there is a
  top-level context
  $S'$ and context $E'$ of Figure \ref{fig:headreductionderived} such that
  $(S[N])^\#=S'[E'[N^\sigma]]$ by induction on $S$.

\begin{enumerate}
\item
Follows from the fact that each step of $\mapsto_S$ corresponds to zero or one
step of $\mapsto_T$:
\[
\begin{array}{ll}
  \begin{tikzcd}
    S[E[(\lambda x.v)v']] \arrow[|->]{r}{S} \arrow{d}{*}
    &
    S[E[v[v'/x]]] \arrow{d}{*}
    \\
    S'[E^\sigma[(\lambda x.v^\sigma)v'^\sigma]] \arrow[|->,dashed]{r}{T}
    &
    S'[E^\sigma[v^\sigma[v'^\sigma/x]]]
  \end{tikzcd}
  & 
  \begin{tikzcd}
    S[\lambda x.v] \arrow[|->]{r}{S} \arrow{d}{*}
    &
    S[\lambda .v[Count(S)/x]]  \arrow{d}{*}
    \\
    S'[\lambda x. v^\sigma]  \arrow[equals]{r}{T}
    &
    S'[\lambda x. v^\sigma]
  \end{tikzcd} 
\end{array}
\]
\item Follows from the fact that each step of $\mapsto_T$ corresponds to one
  step of $\mapsto_S$, similar to the above.
\item Follows from induction on the reductions of the $*$ and $\#$
  translations along with the facts that the non-$\beta$ reduction of
  $\mapsto_S$ is
  inverse to $\hreduct$, $t^{*\#}$ is the normal form of $t$ with
  respect to non-$\beta$ reductions of $\mapsto_S$, and the top-level term $v$ is
  the normal form of the readback.
  \qedhere
\end{enumerate}
\end{proof}

\omitnow{
This semantics corresponds to the semantics in Figure \ref{fig:headreductionderived}.
where we quotient out by the equivalence on top-level terms
\[ S[\lambda x.v] \equiv S[\lambda.v[Count(S)/x]] \]
which relates named and index based lambda abstractions.
Further we have already defined how to normalize a top term into its underlying
term according to this equivalence: we simply use the readback relation $\hreduct$.
However, because the simplified semantics does not contain any indices, to make
the connection precise we need to only work with those top-level terms in the derived
semantics with are \emph{legal} in the sense of not having any free indices.
\begin{definition}
  A top-level term $t$ is legal if it is of the form $S[v]$ where for every index $i$ which
  occurs in $S$, $i < Count(S)$.
\end{definition}
Observe that for an ordinary term $v$, $v$ legal implies that $v$ is index free and thus a pure lambda term.
\begin{lemma}
  For every legal top term $t$ there exists a unique legal term $v$ such that $t \dhreduct v$.  We denote this $v$ as $readback(t)$.
\end{lemma}
\begin{proof}
  $t = S[v]$ for some $S$ and $v$, so the lemma follows by induction on the length of $S$.
\end{proof}
\begin{lemma}
  \label{lem:derivedfunctor}
  For legal $S[E[(\lambda x.v) v']]$, $readback(S[E[(\lambda x.v)~v']]) \mapsto readback(S[E[v[v'/x]]])$ according to the rules of the simplified OS.
\end{lemma}
\begin{proof}
First observe that for all $S$ in the derived operational semantics and $S'$ in the simplified semantics, there exists a $S''$ in the simplified semantics and $\phi$ which is a substitution mapping from indices to variables such that $readback(S[S'[E[(\lambda x.v)~v']]]) = S''[\phi(E[(\lambda x.v)~v'])]$ and $readback(S[S'[E[v[v'/x]]]]) = S''[\phi(E[v[v'/x]])]$ as this follows by induction on $S$.  We get the main result by using $\Box$ for $S'$ as $S''[\phi(E[(\lambda x.v)~v])] \mapsto S''[\phi(E[v[v'/x]])]$.
\end{proof}
\begin{lemma}
  \label{lem:derivedopfibration}
  If $S[v]$ is legal and $readback(S[v]) \mapsto v'$ according to the simplified semantics then there exists some $t$ such that $S[v] \mapsto^+ t$ according to the derived semantics and $readback(t) = v'$.
\end{lemma}
\begin{proof}
  Specifically, $S[v] \dmapsto S'[E[(\lambda x.v'')~v''']] \mapsto S'[E[v''[v'''/x]]]$ with $readback(S'[E[v''[v'''/x]]]) = v'$,
  which we show induction on the length of $S$. \begin{itemize}
  \item In the base case $readback(v) = v$ and so $v \mapsto v'$ means that $v = S''[E[(\lambda x.v'')~v''']]$ and $v' = S''[E[v''[v'''/x]]]$ for some $S''$ taken from the syntax of the simplified operational semantics.
  By induction on $S''$ that means there exists $S'$ and $\phi$ a substitution mapping variables to indices such that $readback(S'[\phi(E[v''[v'''/x]])]) = v'$ with
  $S''[E[(\lambda x.v'')~v''']] \dmapsto S'[\phi(E[(\lambda x.v'')~v'''])] \mapsto S'[\phi(E[v''[v'''/x]])]$.
  \item In the inductive case $readback(S[\lambda.v]) = readback (S[\lambda x.v[x/Count(S)]])$ and we know by the inductive hypothesis that $S[\lambda x.v[x/Count(S)]] \dmapsto S'[E[(\lambda x.v'')~v''']] \mapsto S'[E[v''[v'''/x]]]$ but then, $S[\lambda x.v[x/Count(S)]] \mapsto S[\lambda .v]$ and since reduction is deterministic that means that $S[\lambda .v] \dmapsto S'[E[(\lambda x.v'')~v''']] \mapsto S'[E[v''[v'''/x]]]$ with $readback(S'[E[v''[v'''/x]]]) = v'$.
  \end{itemize}
\end{proof}

\begin{theorem}[Equivalence of small-step semantics for head reduction]
For any legal terms $v,v'$ the following conditions are equivalent:
\begin{enumerate}
\item $v \dmapsto v'$ such that there is no $v''$ where $v' \mapsto v''$ (by Figure \ref{fig:headreduction})
\item $v \dmapsto t \dhreduct v'$ such that there is no $t'$ where $t \mapsto t'$ (by Figure \ref{fig:headreductionderived});
\end{enumerate}
\end{theorem}
\begin{proof}
\begin{itemize}
\item To show that $1.$ implies $2.$ suppose $v \dmapsto v'$ and there is no $v''$ such that $v' \mapsto v''$.  By induction of the reduction sequence and Lemma \ref{lem:derivedopfibration} we know that for all $t''$ if $readback(t'') \dmapsto v'$ then there exists some $t$ such that $t'' \dmapsto t$ and $readback(t) = v'$.  Specifically, since $readback(v) = v$, there exists some $t$ such that $v \dmapsto t$ and $readback(t) = v'$ so $t \dhreduct v'$.  Now suppose for contradiction that $t \mapsto t'$, then, by Lemma \ref{lem:derivedfunctor}, $v' \mapsto readback(t')$ but that contradicts the assumption.
\item To show that $2.$ implies $1.$ suppose that $v \dmapsto t \dhreduct v'$ and there is no $t'$ such that $t \mapsto t'$.
Then, by induction on the reduction on $v \dmapsto t$ and by Lemma \ref{lem:derivedfunctor}
we know that $readback(v) \dmapsto readback(t)$.  
But, $t \dhreduct v'$ so $readback(t) = readback(v') = v'$.  Thus, $v \dmapsto v'$.  Now suppose for contradiction that $v' \mapsto v''$.  By \ref{lem:derivedopfibration} that means there exists $t'$ such that $readback(v') = v' \mapsto t'$ which contradicts the assumption.
\end{itemize}
\end{proof}

}
%

\begin{figure}[t]
   {
  \[
  \begin{gathered}
    \begin{aligned}
      c \in \textrm{Command} &::= \cmd{v}{E} \\
      v \in \textrm{Terms} &::= x \mid v~v \mid \lambda x.v \\
      E \in \textrm{CoTerms} &::= v \cdot E \mid S \\
      S \in \textrm{StuckCoTerms} &::= \tpcst \mid \texttt{Abs}(x,S)
    \end{aligned} \\ \\
    \begin{array}{lll}
      \cmd{v~v'}{E} & \reduct & \cmd{v}{v' \cdot E} \\ 
      \cmd{\lambda x.v}{v' \cdot E} & \reduct & \cmd{v[v'/x]}{E} \\
      \cmd{\lambda x.v}{S} &\reduct &  \cmd{v}{\texttt{Abs}(x,S)}
    \end{array}
  \end{gathered}
  \]
}
\caption{Head reduction abstract machine}
\label{fig:variableMachine}
\end{figure}

\begin{remark}
The small-step semantics of Figure  \ref{fig:headreduction} captures the definition of head reduction
given by Barendregt (Definition 8.3.10) \cite{Barendregt} who defines a head redex as follows:
If $M$ is of the form $$\lambda x_1 \cdots x_n. (\lambda x. M_0)M_1 \cdots M_m \enspace , $$
for $n \geq 0$, $m \geq 1$, then $ (\lambda x. M_0)M_1 $ is called the
\emph{head redex} of $M$. Note that the context $\lambda x_1 \cdots x_n. \Box \cdots M_m$ is broken down into
 $\lambda x_1 \cdots x_n. \Box$ which corresponds to a top-level context $S$ and
$\Box \cdots M_n$ which corresponds to a context $E$.  Thus, Barendregt's notion of one-step head reduction:
$$ M \rightarrow_h  N  
\mbox{ if } M \rightarrow^{\Delta} N \enspace, $$
i.e. $N$ results from $M$ by contracting the head redex $\Delta$,
corresponds to the evaluation rule of Figure \ref{fig:headreduction}.
\end{remark}

By once again applying the syntactic correspondence, this time to the simplified operational semantics in Figure \ref{fig:headreduction}, we derive a new abstract machine (Figure \ref{fig:variableMachine}) which works by going under lambdas without performing any substitutions in the process.
To extract computed $\lambda$-calculus terms we add one rule to the readback relation for the Krivine machine.
\[  \cmd{v}{\texttt{Abs}(x,S)} \hreduct \cmd{\lambda x.v}{S} \]
The equivalence of the abstract machine in Figure \ref{fig:variableMachine} and the operational semantics of Figure \ref{fig:headreduction}
follows by construction, and is expressed analogously to Theorem \ref{correspondence}.


\begin{figure}[t]
  \begin{mathpar}
    \inferrule{v \Downarrow_{wh} \lambda x.v' \\ v' \Downarrow_h v''}{v \Downarrow_h \lambda x.v''}~~~~
    \inferrule{v \Downarrow_{wh} v' \\ v' \not \equiv \lambda x.v''}{v \Downarrow_h v'}
  \end{mathpar}
  \caption{Big-step semantics for head reduction}
  \label{fig:bigstepSemantics}
\end{figure}

\begin{figure}[t]
  \begin{mathpar}
    \inferrule{ }{x \Downarrow_{sf} x}~~~~
    \inferrule{v \Downarrow_{sf} v'}{\lambda x.v \Downarrow_{sf} \lambda x.v'}\\
    \inferrule{v_1 \Downarrow_{wh} v_1' \\ v_1' \not \equiv (\lambda x.v)}{v_1~v_2 \Downarrow_{sf} v_1'~v_2}~~~~~
    \inferrule{v_1 \Downarrow_{wh} \lambda x.v_1' \\ v_1'[v_2/x] \Downarrow_{sf} v}{v_1~v_2 \Downarrow_{sf} v}
  \end{mathpar}
  \caption{Sestoft's big-step semantics for head reduction}
  \label{fig:bigstepSemanticsSestoft}
\end{figure}
From the abstract machine in Figure \ref{fig:variableMachine} we again apply the functional correspondence
\cite{Ager:2003:FCE:888251.888254} to derive the big-step semantics in Figure \ref{fig:bigstepSemantics}.
This semantics utilizes the big-step semantics for weak-head reduction ($\Downarrow_{wh}$)
from Figure~\ref{fig:weakheadbigstep} as part of its definition of
head reduction ($\Downarrow_h$).
This  semantics tells us that the way to evaluate a term to head-normal form  is to first evaluate it to weak-head normal form,
and if the resulting term is a lambda to recursively evaluate the body of that lambda.
This corresponds to 
the behavior of the abstract machine which works just like the Krivine machine (which only reduces to weak head)
except in the situation where we have a command consisting of a lambda abstraction and a context which
is not a call-stack. Again, the big-step semantics corresponds to the abstract machine by construction,
the equivalence can be expressed analogously to Theorem \ref{bigcorrespondence}.

Interestingly,  Sestoft gave a different 
 big-step semantics for head reduction \cite{Sestoft:2002:DLC:860256.860276} (Figure \ref{fig:bigstepSemanticsSestoft}).
The only difference between the two semantics is in the way
they search for $\beta$ redexes.  Sestoft's semantics is more redundant by repeating
the same logic for function application from the underlying weak-head
evaluation, whereas the semantics of  Figure  \ref{fig:bigstepSemantics} only adds a loop for descending under
the top-level lambdas produced by weak-head evaluation. However, 
 besides this difference in the search for $\beta$ redexes,
they are the same: they eventually find and perform
$\beta$ redexes in exactly the same order.
By structural induction one can indeed verify that they generate identical
operational semantics.
%
\begin{theorem}
  $v \Downarrow_h v'$  if and only if $v \Downarrow_{sf} v'$.
\end{theorem}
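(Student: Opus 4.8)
The plan is to prove the two implications separately by rule induction, bridging them through the shared weak-head evaluation relation $\Downarrow_{wh}$ that both head semantics invoke. First I would record three routine facts about $\Downarrow_{wh}$, each by a short induction: (P1) if $v \Downarrow_{wh} w$ then $w$ is a weak-head normal form; (P2) $\Downarrow_{wh}$ is deterministic, i.e. $v \Downarrow_{wh} w_1$ and $v \Downarrow_{wh} w_2$ force $w_1 = w_2$ (the rules are syntax-directed, and the two application rules are mutually exclusive according to whether the head's whnf is a $\lambda$); and (P3) every whnf self-evaluates, $w \Downarrow_{wh} w$, while every neutral $N$ moreover satisfies $N \Downarrow_{sf} N$ and $N \Downarrow_h N$ (induction on the length of the neutral spine, using that the head of a neutral is never a $\lambda$).

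The forward direction needs one bridging lemma, call it Lemma A: if $v \Downarrow_{wh} w$ and $w \Downarrow_{sf} u$ then $v \Downarrow_{sf} u$, proved by induction on the derivation of $v \Downarrow_{wh} w$. The variable and $\lambda$ cases are immediate; the interesting case is the $\beta$ rule $v_1~v_2 \Downarrow_{wh} w$ arising from $v_1 \Downarrow_{wh} \lambda x.v_1'$ and $v_1'[v_2/x] \Downarrow_{wh} w$, where the induction hypothesis yields $v_1'[v_2/x] \Downarrow_{sf} u$ and Sestoft's $\beta$ rule reassembles $v_1~v_2 \Downarrow_{sf} u$. The non-$\lambda$ application case is handled by observing that $w = v_1'~v_2$ is neutral, so by (P2)/(P3) the derivation of $w \Downarrow_{sf} u$ forces $u = w$, and Sestoft's non-$\lambda$ application rule gives $v_1~v_2 \Downarrow_{sf} v_1'~v_2 = u$. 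With Lemma A in hand, $v \Downarrow_h u \implies v \Downarrow_{sf} u$ follows by induction on the $\Downarrow_h$ derivation: in the descend-under-$\lambda$ case the hypothesis gives $v_0 \Downarrow_{sf} u_0$, hence $\lambda x.v_0 \Downarrow_{sf} \lambda x.u_0$, which Lemma A composes with $v \Downarrow_{wh} \lambda x.v_0$; in the neutral case (P3) supplies $w \Downarrow_{sf} w$ and Lemma A concludes.

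For the backward direction I would prove Lemma B: if $v \Downarrow_{wh} w$, then $v \Downarrow_h u$ iff $w \Downarrow_h u$. This is where determinism does the real work: since every $\Downarrow_h$ derivation of $v$ begins by computing a whnf, (P2) forces that whnf to be exactly $w$, so head-evaluating $v$ and head-evaluating its own whnf $w$ take the same second step (recurse under the top $\lambda$, or stop at a neutral), and both directions follow by case analysis using (P1)/(P3). Then $v \Downarrow_{sf} u \implies v \Downarrow_h u$ is an induction on the $\Downarrow_{sf}$ derivation, the only nontrivial case being Sestoft's $\beta$ rule for $v_1~v_2$: here the hypothesis gives $v_1'[v_2/x] \Downarrow_h u$, and since $v_1 \Downarrow_{wh} \lambda x.v_1'$ yields $v_1~v_2 \Downarrow_{wh} z$ with $v_1'[v_2/x] \Downarrow_{wh} z$, two applications of Lemma B (transporting $\Downarrow_h u$ forward along $v_1'[v_2/x] \Downarrow_{wh} z$ and back along $v_1~v_2 \Downarrow_{wh} z$) deliver $v_1~v_2 \Downarrow_h u$.

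The main obstacle I anticipate is exactly this mismatch in the $\beta$-application case of the backward direction: Sestoft's semantics performs the head redex and then \emph{immediately} continues in $\Downarrow_{sf}$ on the contractum, whereas $\Downarrow_h$ insists on first driving the whole application to weak-head normal form before recursing. Lemma B is the device that reconciles these two evaluation strategies, and its proof is the one place where I genuinely rely on determinism of weak-head evaluation (P2); every other case is bookkeeping over syntax-directed rules, which matches the paper's remark that the two semantics differ only in how they search for $\beta$-redexes while performing them in the same order.
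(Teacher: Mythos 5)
Your proof is correct, and it is worth noting that it supplies far more than the paper does: the paper's entire proof of this theorem is the one-sentence remark that ``by structural induction one can indeed verify that they generate identical operational semantics,'' with all details left implicit. Your decomposition makes the hidden content explicit and locates it accurately. Lemma A (composing $\Downarrow_{wh}$ with $\Downarrow_{sf}$) and Lemma B (head evaluation is invariant under passage to the weak-head normal form) are precisely the bridging facts behind the paper's observation that the two semantics ``eventually find and perform $\beta$ redexes in exactly the same order'': Lemma A accounts for $\Downarrow_h$ front-loading the weak-head search that Sestoft's rules redo incrementally at each application, while Lemma B, powered by determinism of $\Downarrow_{wh}$ (your P2), reconciles Sestoft's immediate continuation on the contractum with $\Downarrow_h$'s insistence on first reaching a whnf --- and you correctly identify that $\beta$ case as the only place where the two strategies genuinely diverge. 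One step you should make explicit in that backward $\beta$ case: the existence of $z$ with $v_1'[v_2/x] \Downarrow_{wh} z$ does not follow from $v_1 \Downarrow_{wh} \lambda x.v_1'$ alone, as your phrasing suggests; it follows because the induction hypothesis gives $v_1'[v_2/x] \Downarrow_h u$, every $\Downarrow_h$ derivation contains a $\Downarrow_{wh}$ premise for its subject, and P2 then identifies that premise's whnf with the $z$ fed into the weak-head $\beta$ rule. You gesture at this in your Lemma B discussion, but stating it where $z$ is introduced would close the only compressed step; with that, your argument is a complete proof of the equivalence the paper merely asserts.
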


\section{Coalesced Projection}
\label{coalesced}
The projection based machine in Figure \ref{fig:headreductionProjections} has the desirable
feature that we will never encounter a variable when we start with a closed program. Additionally, unlike the
machine in Figure \ref{fig:variableMachine}, machine states do not have co-terms that bind variables in their opposing term.
%
 On the other hand, it has a certain undesirable
property in that when we substitute a projection in for a variable
\[ \cmd{\lambda x.v}{S} \reduct \cmd{v[\car{S}/x]}{\cdr{S}} \]
we may significantly increase the size of the term as the stuck co-term
has size linear in the number of lambda abstractions we have previously
eliminated in this way.
The story is even worse in the other direction: the readback relation
for the projection machine (and associated operational semantics)
depends on performing a deep
pattern-match to replace projections with variables
\[ \cmd{v}{\cdr{S}} \hreduct \cmd{\lambda x.v[x/\car{S}]}{S} \]
which requires matching \emph{all the way} against $S.$

We now show that we can improve the abstract machine for
head evaluation by combining  multiple projections into one. 
We will utilize the macro projection operations $\pickE{n}{-}$ and
$\dropE{n}{-}$ which, from the perspective of $\car{-}$ and $\cdr{-}$,
coalesce sequences of projections into a single operation:
{
\begin{align*}
  \dropE{0}{E} & \triangleq E
  &
  \pickE{n}{E} & \triangleq \car{\dropE{n}{E}}
  \\
  \dropE{n+1}{E} & \triangleq \cdr{\dropE{n}{E}}
\end{align*}
}%
Given that our operational semantics is for the pure lambda calculus,
$\pickE{n}{\tpcst}$ and
$\dropE{n}{\tpcst}$ are the only call-stack projections we need in the
syntax.

We now construct an abstract machine (Figure
\ref{fig:headreductionMachine}) for head evaluation of the
$\lambda$-calculus which is exactly like the Krivine machine
with one additional rule utilizing the coalesced projections. As in the machine
of Figure \ref{fig:headreductionProjections}, the coalesced machine
 ``splits'' the top-level into a call-stack and
continues.  Analogously to Example  \ref{tracen}, one has:
{%
\begin{align*}
  \cmd{\lambda x. (\lambda y. y) x}{\dropE{0}{\tpcst}}
  &\reduct
  \cmd{(\lambda y. y) (\pickE{0}{\tpcst})}{\dropE{1}{\tpcst}}
  \\
  &\reds
  \cmd{\pickE{0}{\tpcst}}{\dropE{1}{\tpcst}}
\end{align*}
}%
\begin{figure}[t]
{
\[
\begin{gathered}
\begin{aligned}
  c \in \textrm{Command} &::= \cmd{v}{E} \\
  v \in \textrm{Terms} &::= x \mid v~v \mid \lambda x.v \mid \pickE{n}{\tpcst}\\
  E \in \textrm{CoTerms} &::= v \cdot E \mid \dropE{n}{\tpcst}
\end{aligned} \\ \\
\begin{array}{lll}
  \cmd{v~v'}{E} &\reduct & \cmd{v}{v' \cdot E} \\
  \cmd{\lambda x.v}{v' \cdot E} &  \reduct & \cmd{v[v'/x]}{E} \\
  \cmd{\lambda x.v}{\dropE{n}{\tpcst}}
  &\reduct &
  \cmd{v[\pickE{n}{\tpcst}/x]}{\dropE{n+1}{\tpcst}}
\end{array}
\end{gathered}
\]
}
\caption{Coalesced projection abstract machine }
\label{fig:headreductionMachine}
\end{figure}
If the machine
terminates with a result like $\cmd{\pickE{n}{\tpcst}}{E}$, it is
straightforward to read back the corresponding head normal form:
{ \[
\begin{gathered}
\begin{aligned}
 \cmd{v}{v' \cdot E} &\hreduct \cmd{v~v'}{E} \qquad  & \cmd{v}{\tpcst} & \hreduct v  
 &  \qquad 
 \cmd{v}{\dropE{n+1}{\tpcst}}
 &\hreduct
 \cmd{\lambda x.v[x/\pickE{n}{\tpcst}]}{\dropE{n}{\tpcst}}
\end{aligned}\\
\end{gathered}
\] }%
where $x$ is not free in $v$ in the third rule.  Note that the
substitution $v[x/\pickE{n}{\tpcst}]$ replaces all occurrences of
terms of the form $\pickE{n}{\tpcst}$ inside $v$ with $x$.
Correctness is ensured by the fact that reduction in the machine
with coalesced projections corresponds to reduction in the machine
with non-coalesced projections, which in turn corresponds to the operational
semantics of head reduction.

\begin{theorem}
[Equivalence of Coalesced and Non Coalesced Projection Machines]
$\phantom{}$

\begin{enumerate}
\item $c \reduct c'$ in the coalesced machine if and only if $c \reduct c'$ in the non-coalesced machine and
\item $c \hreduct c'$ in the coalesced machine if and only if $c \hreduct c'$ in the non-coalesced machine
\end{enumerate}
where conversion is achieved by interpreting $\dropE{n}{-}$ and $\pickE{n}{-}$ as macros.
\end{theorem}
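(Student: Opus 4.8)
The plan is to treat the macro interpretation as a bijective translation $\lfloor - \rfloor$ from the syntactic objects (commands, terms, and co-terms) of the coalesced machine of Figure \ref{fig:headreductionMachine} to those of the non-coalesced machine of Figure \ref{fig:headreductionProjections}, and then to show that this translation is a \emph{step-for-step isomorphism} of the two transition systems: one coalesced $\reduct$ step corresponds to exactly one non-coalesced $\reduct$ step and conversely, and likewise for $\hreduct$. Since the macro definitions unfold $\dropE{n}{\tpcst}$ to $\cdnr{\tpcst}$ and $\pickE{n}{\tpcst}$ to $\car{\cdnr{\tpcst}}$, the translation is literally macro expansion: it leaves $x$, application, and $\lambda x.-$ untouched and rewrites each coalesced projection into its unfolded form.

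First I would establish that $\lfloor - \rfloor$ is a bijection. The key observation is that the grammar $S ::= \tpcst \mid \cdr{S}$ forces every non-coalesced stuck co-term to be a tower $\cdnr{\tpcst}$ for a unique $n$, and the only projection term is $\car{S} = \car{\cdnr{\tpcst}}$; these are exactly the images of $\dropE{n}{\tpcst}$ and $\pickE{n}{\tpcst}$. The inverse therefore collapses each maximal tower of $\cdr$s over $\tpcst$ into a single $\dropE{n}{-}$ and each $\car$-of-a-tower into a $\pickE{n}{-}$, and this is well defined by induction on the two grammars. Next I would prove the substitution lemma $\lfloor v[v'/x] \rfloor = \lfloor v \rfloor [\lfloor v' \rfloor / x]$ by structural induction on $v$: the translation is a homomorphism on the common constructors, and because the projection macros $\pickE{n}{\tpcst}$ and $\dropE{n}{\tpcst}$ contain no ordinary variables, substituting for $x$ passes through them unchanged. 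The analogous identity $\lfloor v[x/\pickE{n}{\tpcst}] \rfloor = \lfloor v \rfloor [x / \car{\cdnr{\tpcst}}]$ needed for readback follows the same way, using injectivity of $\lfloor - \rfloor$ so that the occurrences being replaced correspond bijectively.

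With these in place the proof reduces to a rule-by-rule check. For $\reduct$, the application rule $\cmd{v~v'}{E} \reduct \cmd{v}{v' \cdot E}$ and the stack $\beta$-rule $\cmd{\lambda x.v}{v' \cdot E} \reduct \cmd{v[v'/x]}{E}$ are syntactically identical in both machines, so they transport across $\lfloor - \rfloor$ immediately (the $\beta$-rule using the substitution lemma). The only rule that genuinely involves the macros, $\cmd{\lambda x.v}{\dropE{n}{\tpcst}} \reduct \cmd{v[\pickE{n}{\tpcst}/x]}{\dropE{n+1}{\tpcst}}$, unfolds under $\lfloor - \rfloor$ using $\dropE{n}{\tpcst} = \cdnr{\tpcst}$, $\pickE{n}{\tpcst} = \car{\cdnr{\tpcst}}$, and $\dropE{n+1}{\tpcst} = \cdr{\cdnr{\tpcst}}$ to exactly the non-coalesced rule $\cmd{\lambda x.v}{S} \reduct \cmd{v[\car{S}/x]}{\cdr{S}}$ instantiated at $S = \cdnr{\tpcst}$. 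Since $\lfloor - \rfloor$ is a bijection and every non-coalesced instance of this rule has such an $S$, both directions of the ``iff'' follow. The readback rules are checked the same way: the first two coincide, and the third, $\cmd{v}{\dropE{n+1}{\tpcst}} \hreduct \cmd{\lambda x.v[x/\pickE{n}{\tpcst}]}{\dropE{n}{\tpcst}}$, unfolds exactly to $\cmd{v}{\cdr{S}} \hreduct \cmd{\lambda x.v[x/\car{S}]}{S}$.

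I expect no deep obstacle; the content is essentially bookkeeping, and the one point deserving care is confirming that $\lfloor - \rfloor$ is genuinely a bijection rather than merely a surjection. This rests on the grammar of Figure \ref{fig:headreductionProjections} admitting no stuck co-terms or projection terms other than towers $\cdnr{\tpcst}$ and their $\car$s, together with the fact that the coalesced syntax, being restricted to the base $\tpcst$ in the pure $\lambda$-calculus, produces precisely these objects. Once the bijection and the substitution lemma are nailed down, the single-step correspondence---and hence both equivalences stated in the theorem---is immediate from the rule matching.
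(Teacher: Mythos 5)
Your proposal is correct and takes essentially the same route as the paper's own proof, which is a terse rule-by-rule case analysis under macro expansion whose only interesting case is the lambda-versus-stuck-co-term rule and its readback inverse, where $\car{\dropE{n}{\tpcst}} = \pickE{n}{\tpcst}$, $\cdr{\dropE{n}{\tpcst}} = \dropE{n+1}{\tpcst}$, and the substitutions $[x/\pickE{n}{\tpcst}]$ and $[\pickE{n}{\tpcst}/x]$ cancel. You merely make explicit what the paper leaves implicit (the bijectivity of the macro translation, resting on every stuck co-term being a unique tower $\cdnr{\tpcst}$, and the substitution lemma), so no substantive difference.
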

\begin{proof}
By cases.  Note that the interesting case in each direction is $\cmd{\lambda x.v[x/\pickE{n}{\tpcst}]}{\dropE{n}{\tpcst}} \reduct \cmd{v}{\dropE{n+1}{\tpcst}}$ which follows since 
\begin{align*}
\cmd{\lambda x.v[x/\pickE{n}{\tpcst}]}{\dropE{n}{\tpcst}} &\reduct \cmd{v[x/\pickE{n}{\tpcst}][\car{\dropE{n}{\tpcst}}/x]}{\cdr{\dropE{n}{\tpcst}}} \\
&= \cmd{v[x/\pickE{n}{\tpcst}][\pickE{n}{\tpcst}]}{\dropE{n+1}{\tpcst}} \\
&= \cmd{v}{\dropE{n+1}{\tpcst}} \qedhere
\end{align*}
\end{proof}

Our abstract machine, in replacing variables with coalesced sequences of projections,
exhibits a striking resemblance to implementations of the $\lambda$-calculus based on de Bruijn indices \cite{deBruijn1972381}.
Indeed, our abstract machine can be seen as given a semantic justification for de Bruijn indices as offsets into the call-stack.
However, our approach differs from de Bruijn indices in that, in general, we still utilize named variables.
Specifically, numerical indices are only ever used for
representing variables bound in the leftmost branch of the lambda term viewed as a tree.
As such, we avoid the complexities of renumbering during $\beta$ reduction.
However, implementations which do use de Bruijn to achieve a nameless implementation of variables
in general have the advantage that the substitution operations $[\pickE{n}{\tpcst}/x]$ as used in
the abstract machine could be replaced with a no-op.
The Krivine machine is often presented using de Bruijn despite being used only for computing
whnfs.  With the addition of our extra rule---which in a de Bruijn setting does not require substitution---we extend it to compute hnfs.

Further, we can extract from our abstract machine an operational semantics which utilizes de Bruijn indices
only for top-level lambdas by way of the syntactic correspondence. 
\begin{figure}[t]
  {
\[
\begin{gathered}
  \begin{aligned}
  t \in \textrm{TopTerms} &::= \lambda^n.v\\
  v \in \textrm{Terms} &::= x \mid n \mid v~v \mid \lambda x.v \\
  E \in \textrm{Contexts} &::= \Box \mid E~v
  \end{aligned} \\
  \begin{array}{lll}
    \lambda^n.\lambda x.v &  \mapsto &  \lambda^{n+1}.v[n/x]\\
    \lambda^n.E[(\lambda x.v)~v']  & \mapsto  & \lambda^n.E[v[v'/x]]
  \end{array}
\end{gathered}
\]
}
\caption{Operational semantics of head reduction with partial de Bruijn}
\label{fig:headreductiondebruij}
\end{figure}
The resulting semantics, in Figure \ref{fig:headreductiondebruij}, keeps track of the number
of top-level lambdas as part of $\lambda^n.v$ while performing head reduction on $v$. As
expected, the equivalence between the coalesced abstract machine and the operational semantics
is by construction.
%
%
%
The coalesced projection machine, and de Bruijn based operational semantics, are essentially
equivalent to the projection machine and associated operational semantics.
The difference is that by coalescing multiple projections into one, we replace
the use of unary natural numbers with abstract numbers which could be implemented efficiently.
In this way, we both greatly reduce the size of terms and make the pattern-matching to readback
final results efficient.  
Numerical de Bruijn indices are an efficient implementation
of the idea that lambda abstractions use variables to denote projections out of a call stack.

\section{Conclusion}
The desirable combination of weak-head normal forms, call-by-name evaluation, and extensionality is not achievable.
This is a fundamental, albeit easily forgettable, constraint in programming language design.
However, we have seen that there is a simple way out of this trilemma: replace weak-head normal forms with head normal forms.
Moreover, reducing to head normal form is motivated by an analysis of the meaning of lambda abstraction.
We took a detour through control so that we could directly reason about not just terms but also their context.
This detour taught us to think about the traditional $\beta$ rule as a rule for pattern-matching on contexts.
By analogy to the different ways we can destruct terms we recognized 
an alternative implementation based on projections out of call-stacks.
Projection allows us to run a lambda abstraction even when we don't yet have its argument.

Returning to the pure $\lambda$-calculus we derived an abstract machine from this projection-based approach.
Using the syntactic correspondence we derived from our abstract machine an operational semantics, and showed how that
could be massaged into a simpler operational semantics for head reduction.
With a second use of the syntactic correspondence
we derived an abstract machine which implemented head reduction in what seems a more traditional way.
By the functional correspondence
we showed finally that our entire chain of abstract machines and operational semantics correspond to Sestoft's big-step semantics
for head reduction.
The use of automated techniques for deriving one form of semantics from another
makes it easy to rapidly explore the ramifications that follow from a shift of strategy;
in our case from weak-head to head reduction.
So in the end, we arrive at a variety of different semantics for head reduction,
all of which are equivalent and correspond to Barendregt's definition
of head reduction for the $\lambda$-calculus.

Branching from our projection based machine we derived an efficient abstract machine which coalesces projections,
so we may have our cake and eat it too.
We escape the trilemma by giving up on weak-head reduction, while still retaining its virtues like avoiding the variable capture problem
and keeping all reductions at the top-level.
Our machine is identical to the Krivine machine, which performs weak-head evaluation, except
for a single extra rule which tells us what to do when we have a lambda abstraction and no arguments to feed it.
Further, these changes can be seen as a limited usage of (and application for) de Bruijn indices.

Although our motivating problem and eventual solution were in the context of the pure $\lambda$-calculus,
our detour through explicit continuations taught us valuable lessons about operational semantics.
Having continuations forces us to tackle many operational issues head on because it makes the
contexts a first class part of the language.
Thus, a meta lesson of this paper is that adding control can be a helpful aid to designing even pure languages.

In practice, the trilemma is usually avoided by giving up on call-by-name or extensionality rather than
 weak-head normal forms.
However, it may be that effective approaches to head evaluation such as those in this paper are of interest even in
call-by-value languages or in settings (such as Haskell with \texttt{seq}) that lack the $\eta$ axiom.
Exploring the practical utility of head reduction may be a profitable avenue for future work.

\begin{figure}[t]
{\small
\[
\begin{aligned}
  c \in \textrm{Commands} &::= \cmdE{v}{\sigma}{E} \\
  v \in \textrm{Terms} &::= x \mid \lambda x.v \mid v~v \\
  E \in \textrm{CoTerms} &::=  \tpcst \mid t \cdot E \\
  \sigma \in \textrm{Environments} &::= [] \mid (x \mapsto t)::\sigma \\
  t \in \textrm{Closures} &::= (v,\sigma)
\end{aligned}
\]
\[
\begin{aligned}
  \cmdE{v~v'}{\sigma}{E} &\reduct \cmdE{v}{\sigma}{(v',\sigma) \cdot E} \\
  \cmdE{\lambda x.v}{\sigma}{t \cdot E} &\reduct \cmdE{v}{(x \mapsto t)::\sigma}{E} \\
  \cmdE{x}{\sigma}{E} &\reduct \cmdE{v}{\sigma'}{E}~~~\textrm{where $\sigma(x) = (v,\sigma')$}
\end{aligned}
\]
}
\caption{Krivine abstract machine with environments}
\label{fig:krivinemachineclosures}
\end{figure}
Finally, we presented our abstract machines using substitutions and have suggested
a possible alternative implementation based on de Bruijn indices, but we can also perform closure conversion to
handle variables. For example, the machine in Figure \ref{fig:krivinemachineclosures} is an implementation
of the Krivine machine using closures.
We assume the existence of a data structure for maps from variable names to closures which
supports at least an extension operator $::$ and variable lookup, which we write using list-like notation.
\begin{figure}[t]
{\small
\[
\begin{aligned}
  c \in \textrm{Commands} &::= \cmdE{v}{\sigma}{E} \\
  v \in \textrm{Terms} &::= x \mid \lambda x.v \mid v~v \mid \pickE{n}{\tpcst} \\
  E \in \textrm{CoTerms} &::=  \dropE{n}{\tpcst} \mid t \cdot E \\
  \sigma \in \textrm{Environments} &::= [] \mid (x \mapsto t)::\sigma \\
  t \in \textrm{Closures} &::= (v,\sigma)
\end{aligned}
\]
\[
\begin{aligned}
  \cmdE{v~v'}{\sigma}{E} &\reduct \cmdE{v}{\sigma}{(v',\sigma) \cdot E} \\
  \cmdE{\lambda x.v}{\sigma}{t \cdot E} &\reduct \cmdE{v}{(x \mapsto t)::\sigma}{E} \\
  \cmdE{\lambda x.v}{\sigma}{\dropE{n}{\tpcst}} &\reduct \cmdE{v}{(x \mapsto (\pickE{n}{\tpcst},\sigma))::\sigma}{\dropE{n+1}{\tpcst}} \\
  \cmdE{x}{\sigma}{E} &\reduct \cmdE{v}{\sigma'}{E}~~~\textrm{where $\sigma(x) = (v,\sigma')$}
\end{aligned}
\]
}
\caption{Head reduction abstract machine with environments}
\label{fig:headmachineclosures}
\end{figure}
Similarly, the machine in Figure \ref{fig:headmachineclosures} takes our efficient coalesced machine and replaces
the use of substitutions with environments.  
This would correspond to a calculus of explicit substitutions ˆ la $\lambda \rho$ \cite{curien1988lambda}.
However, in general, the problem of how to handle variables is orthogonal from questions of operational semantics,
and thus environment based handling of variables can be added after the fact.  What the present paper suggests,
however, is that the de Bruijn view of representing variables as numbers is semantically motivated
by the desire to make reduction for the call-by-name lambda calculus consistent with extensional principles.
\bibliographystyle{eptcs}
\bibliography{extens}

\end{document}